\newtheorem{theorem}{Theorem}
\newtheorem{proposition}[theorem]{Proposition}
\newtheorem{corollary}[theorem]{Corollary}
\newtheorem{example}[theorem]{Example}
\newtheorem{definition}[theorem]{Definition}
\def\M{\mathcal{M}}
\def\E{\mathcal{E}}
\newcommand{\ket}[1]{\left| #1 \right>}
\begin{document}

\title{On the `Reality' of Observable Properties}
\author{Shane Mansfield
\thanks{shane.mansfield@cs.ox.ac.uk; The author thanks Samson Abramsky, Clare Horsman, Nadish de Silva and Rui Soares Barbosa for comments and discussions.}
}
\affil{Quantum Group, Department of Computer Science, \\ University of Oxford}
\date{\today}

\maketitle

\begin{abstract}
This note contains some initial work on attempting to bring recent developments in the foundations of quantum mechanics concerning the nature of the wavefunction within the scope of more logical and structural methods. A first step involves generalising and reformulating a criterion for the reality of the wavefunction proposed by Harrigan \& Spekkens, which was central to the PBR theorem. The resulting criterion has several advantages, including the avoidance of certain technical difficulties relating to sets of measure zero. By considering the `reality' not of the wavefunction but of the observable properties of any ontological physical theory a new characterisation of non-locality and contextuality is found. Secondly, a careful analysis of preparation independence, one of the key assumptions of the PBR theorem, leads to an analogy with Bell locality, and thence to a proposal to weaken it to an assumption of `no-preparation-signalling' in analogy with no-signalling. This amounts to introducing non-local correlations in the joint ontic state, which is, at least, consistent with the Bell and Kochen-Specker theorems. The question of whether the PBR result can be strengthened to hold under this relaxed assumption is therefore posed.
\end{abstract}

\begin{multicols}{2}

\section{Introduction}

The issue of the reality of the wavefunction has received a lot of attention recently (see especially \cite{pusey:12,colbeck:12}). In this note, we show that insights may also be gained by taking a similar approach to considering the `reality' of objects and properties in physical theories more generally, and in particular that such an approach can provide a new perspective on non-locality and contextuality.
The first step will be to introduce a suitably general criterion for `reality' inspired by the Harrigan-Spekkens criterion for the reality of the wavefunction \cite{harrigan:10}, which was the subject of the Pusey-Barrett-Rudolph theorem \cite{pusey:12}.

The aim is to formulate the ideas in a manner that can allow for a deeper, structural understanding of what is at play. Indeed, the initial motivation was to bring considerations of this kind within the scope of the methods of the unified sheaf-theoretic approach to non-locality and contextuality \cite{abramsky:11,abramsky:11a,mansfield:13b}. The resulting, more general criterion has several advantages. It avoids certain technical difficulties, and due to its generality it can be applied within any ontological physical theory: generalised probabilistic theories \cite{barrett:07}, or classical mechanics, for example. 

The initial investigations here also demonstrate that such considerations can provide an alternative perspective on foundational questions in general. We find an alternative characterisation of both local and non-contextual correlations, as those that can arise from observations of properties that are `real' in this sense. This ties together the notions of locality and reality, bringing to light another link between the Bell and Pusey-Barrett-Rudolph (PBR) theorems \cite{pusey:12}, which deal, respectively, with these properties.

We begin, in section \ref{ontic}, by presenting our generalisation and reformulation of the criterion for reality, which requires minimal background. Much of the foundations of quantum mechanics literature, including that concerning recent developments on the reality of the wavefunction, deals with hidden variable or ontological models. Therefore, before considering some uses of the more general criterion in section \ref{observables}, we will provide a brief review of this framework in section \ref{ontsect}. It has already been pointed out that local hidden variable models can be subsumed by the sheaf-theoretic framework \cite{brandenburger:11,brandenburger:13}; we will arrive at another proof of this fact in section \ref{observables}.

Finally, in section \ref{sec:pbr}, we provide some comments on the property of preparation independence, which first appeared as one of the assumptions of the PBR theorem. We show that the assumption, which is crucial to the theorem, is analogous to Bell locality in a precise sense; yet another link between the Bell and PBR theorems. We propose a weakening of this assumption to one analogous to no-signalling \cite{ghirardi:80}, which we believe can still be well-motivated. It is not clear, however, if it is possible to strengthen the PBR theorem so that its result still holds under the less strict assumption. Certainly, such a strengthening would require a different argument. We will also mention that, by assuming preparation independence, one can very easily prove a weakened form of Bell's theorem \cite{bell:64}, a fact that may cast suspicion on the strength of the stricter assumption.

\section{A Criterion for Reality}\label{ontic}

In this section we will use the terminology of Harrigan \& Spekkens \cite{harrigan:10}, which has been established in the literature. We begin by reviewing their criterion for the reality, or \emph{onticity}, of the wavefunction, which will then be reformulated and generalised. For this, we need only postulate, for each system, a space $\Lambda$ of \emph{ontic states}. These can be considered to correspond to real, physical states of the system. The idea will be that objects or properties that are determined with certainty by the ontic state are themselves ontic. The term ontic is meant to describe that which relates to real as opposed to phenomenal existence, though we do not propose to get into a discussion of the suitability of the terminology here. Similarly, objects or properties that are not determined with certainty are said to be \emph{epistemic}; the literal meaning of the term is that which relates to knowledge or to its degree of validation. The use of the term in \cite{harrigan:10} can be taken to reflect the fact that objects and properties of this kind are necessarily probabilistic and could thus be assumed to represent a degree of knowledge about some underlying ontic object or property. It should be borne in mind, of course, that results relating to these definitions will hold regardless of the physical significance attached to them.

As well as the existence of an ontic state space, the authors of \cite{harrigan:10} also posit the assumption that the preparation of any quantum state $\left| \psi \right>$ induces a distribution $\mu_{\left| \psi \right>}$ over the ontic state space $\Lambda$ for that system, specifying the probabilities for the system to be in each ontic state given that it has been prepared in this way.

\begin{definition}[Harrigan \& Spekkens \cite{harrigan:10}]\label{def:hs}
If, for all wavefunctions $\left| \psi \right> \neq \left| \phi \right>$ of any system, the induced distributions $\mu_{\left| \psi \right>}$ and $\mu_{\left| \phi \right>}$ have non-overlapping supports, the wavefunction is said to be \emph{ontic}. Otherwise, there exist some $\left| \psi \right> \neq \left| \phi \right>$ such that $\mu_{\left| \psi \right>}(\lambda)>0$ and $\mu_{\left| \phi \right>}(\lambda)>0$ for some $\lambda \in \Lambda$, and the wavefunction is said to be \emph{epistemic}.
\end{definition}

We now present a more general reformulation of the definition. As we will see, this can be applied to any object or property. Though the wavefunction would more usually be considered to be (at least) a mathematical object rather than a property of a system, for simplicity we only refer to properties from now on.

\begin{definition}\label{def:ontepi}
A \emph{$\mathcal{V}$-valued property} over $\Lambda$ is a function $f: \Lambda \rightarrow \mathcal{D}(\mathcal{V})$, where $\mathcal{D}(\mathcal{V})$ is the set of probability distributions over $\mathcal{V}$.
The property is said to be \emph{ontic} in the special case that, for all $\lambda \in \Lambda$, the distribution $f(\lambda)$ over $\mathcal{V}$ is a delta function.
Otherwise, it is said to be \emph{epistemic}.
\end{definition}

Another way of stating this is that ontic properties are generated by functions $\widehat{f}:\Lambda \rightarrow \mathcal{V}$; i.e. they map each ontic state to a unique value. For epistemic properties, however, there is at least one ontic state that is compatible with two or more distinct values in $\mathcal{V}$.

We now set about showing how these definitions relate, which may not be immediatelyz\ clear. Any $\mathcal{V}$-valued property $f$ specifies probability distributions over $\mathcal{V}$, conditioned on $\Lambda$. Bayesian inversion can be used to obtain probability distributions over $\Lambda$, conditioned on $\mathcal{V}$, which we (suggestively) label $\{ \mu_v \}_{v \in \mathcal{V}}$. Explicitly,
\begin{equation}\label{distdef}
\mu_v(\lambda) := \frac{ f(\lambda) (v) \cdot p(\lambda)}{ \int_{\Lambda} f(\lambda') (v) \cdot p(\lambda') \, d\lambda'},
\end{equation}
assuming a uniform distribution $p(\lambda)$ on $\Lambda$. Note that this is only well-defined for finite $\Lambda$, and that a more careful measure theoretic treatment, which will not be provided here, is required for the infinite case.

\begin{proposition}\label{prop:hscor}
A $\mathcal{V}$-valued property over finite $\Lambda$ is ontic (according to definition \ref{def:ontepi}) if and only if the distributions $\{ \mu_v \}_{v \in \mathcal{V}}$ have non-overlapping supports.
\end{proposition}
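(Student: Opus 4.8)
The plan is to reduce the equivalence to an elementary observation about probability distributions, unwinding Definition \ref{def:ontepi} on one side and the defining formula \eqref{distdef} of the $\mu_v$ on the other. First I would record that, since $\Lambda$ is finite, the uniform prior is strictly positive, so $p(\lambda)$ cancels in \eqref{distdef} and, whenever the normalising sum is nonzero,
\[
\mu_v(\lambda) \;=\; \frac{f(\lambda)(v)}{\sum_{\lambda' \in \Lambda} f(\lambda')(v)}.
\]
(If the sum vanishes then $v$ has probability zero under every $f(\lambda)$ and never appears in any support, so such values may be set aside.) The key consequence is then immediate: $\mathrm{supp}(\mu_v) = \{\lambda \in \Lambda : f(\lambda)(v) > 0\}$. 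Hence the family $\{\mu_v\}_{v \in \mathcal{V}}$ has pairwise non-overlapping supports if and only if, for each $\lambda \in \Lambda$, at most one $v \in \mathcal{V}$ satisfies $f(\lambda)(v) > 0$.

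The second step is to match this last condition with onticity. Since $f(\lambda) \in \mathcal{D}(\mathcal{V})$ we have $\sum_{v \in \mathcal{V}} f(\lambda)(v) = 1$, so there is always \emph{at least} one $v$ with $f(\lambda)(v) > 0$; therefore ``at most one'' coincides with ``exactly one'', and a distribution with a single value of positive probability must assign that value probability $1$, i.e.\ $f(\lambda)$ is a delta function. Quantifying over $\lambda$, this is precisely the statement that $f$ is ontic, which gives one direction. For the converse I would simply take $f$ ontic, write $f(\lambda) = \delta_{\widehat f(\lambda)}$ for the associated function $\widehat f : \Lambda \to \mathcal{V}$, and note that $\mathrm{supp}(\mu_v) = \widehat f^{-1}(v)$, so the supports are disjoint because preimages of distinct points are disjoint.

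I do not expect a genuine obstacle; the proposition is essentially a definitional unwinding once the support characterisation is in place. The only point requiring care is well-definedness: finiteness of $\Lambda$ is exactly what licenses the honest uniform prior and the finite normalising sum in \eqref{distdef}, and one should either restrict attention to values $v$ attained with positive probability by some ontic state or else observe that the remaining $v$ contribute empty supports and do not affect the statement. This is also why the infinite case, flagged after \eqref{distdef}, genuinely requires the measure-theoretic treatment the paper defers.
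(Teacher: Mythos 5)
Your proof is correct and follows essentially the same route as the paper's: both hinge on the observation that, with a strictly positive uniform prior on finite $\Lambda$, equation (\ref{distdef}) gives $\mu_v(\lambda)>0$ if and only if $f(\lambda)(v)>0$, after which the equivalence is a definitional unwinding. Your version is slightly more explicit about the vanishing-normaliser edge case and argues directly rather than by contradiction, but the substance is identical.
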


\begin{proof}
Suppose the property $f$ is ontic in the sense of definition \ref{def:ontepi}, let $\lambda \in \Lambda$, and let $v,v' \in \mathcal{V}$ such that $v \neq v'$. Assume for a contradiction that $\mu_v(\lambda)>0$ and $\mu_{v'}(\lambda)>0$. Then, by (\ref{distdef}), $f(\lambda) (v)>0$ and $f(\lambda) (v')>0$; but since $f$ is ontic,
\[
v_\lambda = v \neq v' = v_\lambda,
\]
where $v_\lambda := \widehat{f}(\lambda)$.

Conversely, suppose that the distributions $\{ \mu_v \}_{v \in \mathcal{V}}$ have non-overlapping supports and assume for a contradiction that $f(\lambda) (v)>0$ and $f(\lambda) (v')>0$. Then, by (\ref{distdef}), $\mu_v(\lambda)>0$ and $\mu_{v'}(\lambda)>0$.
\end{proof}

One way of thinking about this correspondence is as a special case of the dual equivalence between the category of von Neumann algebras and $*$-homomorphisms and the category of measure spaces and measurable functions \cite{heunen:13}.

To illustrate, we provide a couple of simple examples of ontic and epistemic properties.

\begin{example}[Classical Mechanics]
The phase space of a system is taken to be the ontic state space. Classical mechanical observables (energy, momentum, etc.) are represented by real-valued functions on phase space, and are therefore ontic.
\end{example}

\begin{example}[Fuzzy Measurement]\label{ex:epi}
Consider an experiment in which a bag is prepared containing two coins, which can be green or white, with equal probability, but are otherwise identical. We claim that the process of removing one and checking its colour measures an epistemic property. If the ontic states are $\Lambda = \{ GG, GW, WG, WW \}$, the property cannot be represented by a $\{G,W\}$-valued function on $\Lambda$. Given the ontic state $GW$, for example, both $G$ and $W$ are compatible, and can arise with equal probability.
\end{example}

In this second example, the property that is being measured is, according to the present definition, epistemic with respect to the state of the bag; it might also be said the example describes a fuzzy measurement on the state of the bag.

Definition \ref{def:ontepi} has several advantages.
\begin{itemize}
\item
It is fully general and can be applied to any object or property in any ontological theory.
\item
It avoids measure theoretic problems relating to sets of measure zero that are inherent to the original.
\item
It is mathematically straightforward and conceptually transparent.
\end{itemize}

\section{Ontological Models}\label{ontsect}

We are concerned with theories that give operational predictions for outcomes to measurements; we refer to sets of such predictions as empirical models. Quantum mechanics is one such theory, which might be described operationally by saying that we associate a density matrix $\rho^p$ with each preparation $p$, a POVM $\{E^m_o\}_{o \in O}$ with each measurement $m$, and prescribe the probability of the outcome $o$ given preparation $p$ and measurement $m$ by
\[
p(o\mid m,p) = \text{tr}(\rho^p \, E^m_o).
\]

We wish, more generally, to consider theories with the same kind of operational structure. In order to do so, we will use some notation that is similar to that of the sheaf-theoretic approach. For each system we assume spaces $P$ of preparations, $X$ of measurements, and $O$ of outcomes. There may be some compatibility structure on the space of measurements, say $\mathcal{M} \subseteq \mathcal{P}(X)$, specifying which sets of measurements can be made jointly (in quantum mechanics, this is specified by the commutative sub-algebras of the algebra of observables). This information encodes which kind of measurement scenario we are working in: e.g. the Bell-CHSH model \cite{clauser:69,bell:87}, Hardy model \cite{hardy:92,hardy:93}, and PR correlations \cite{popescu:94} all deal with two-party scenarios in which each party can choose freely between two binary-outcome measurements\footnote{This measurement scenario is referred to as the $(2,2,2)$ Bell scenario.}. Again, we additionally assume a space $\Lambda$ of \emph{ontic states}, over which each preparation induces a probability distribution.

In an effort to simplify notation, we will use an overline to denote a joint measurement $\overline{m} \in \mathcal{M}$ or joint outcomes $\overline{o} \in \mathcal{E}(\overline{m})$ to that measurement; here $\mathcal{E}(\overline{m})$ is the set of functions $\overline{o}:\overline{m} \rightarrow O$. Readers familiar with the sheaf-theoretic approach will recall that $\mathcal{E}:X \rightarrow O^X$ is the event sheaf. On the other hand, $m\in X$ and $o \in O$ denote individual measurements and outcomes, respectively. Joint preparations and joint ontic states will be treated similarly in section \ref{sec:pbr}.

\begin{definition}\label{def:hv}\index{hidden variable model}
An \emph{ontological} or \emph{hidden variable model} $h$ over $\Lambda$ specifies:
\begin{enumerate}
\item
A distribution $h(\lambda \mid p)$ over the ontic states $\Lambda$ for each preparation $p \in P$;
\item
For each ontic state $\lambda \in \Lambda$ and joint measurement $\overline{m} \in \mathcal{M}$, a distribution
\begin{equation}\label{eq:ontstatestats}
h(\overline{o}\mid \overline{m},\lambda)
\end{equation}
over joint outcomes $\E(\overline{m})$.
\end{enumerate}
The \emph{operational probabilities}\index{operational probabilities} are then prescribed by
\begin{equation}\label{hv}
h(\overline{o}\mid \overline{m},p) = \int_\Lambda d \lambda \; h(\overline{o}\mid\overline{m},\lambda) \; h(\lambda\mid p).
\end{equation}
\end{definition}

The terms ontological model and hidden variable model are both used in the literature, but recently the term ontological model has gained some popularity. It may be a more suitable term in the sense that the `hidden' variable need not necessarily be hidden at all: it could be directly observable. In Bohmian mechanics \cite{bohm:52,bohm:52a}, for example, position and momentum play the role of the hidden variable. It also carries the connotation that such a model is an attempt to describe some underlying ontological reality.

\begin{definition}
A theory which determines the measurement statistics for the ontic states (\ref{eq:ontstatestats}) will be referred to as an \emph{ontological theory} over $\Lambda$.
\end{definition}

We are especially interested in ontological models and theories that can reproduce quantum mechanical predictions. Trivially, the simplest such theory is quantum mechanics itself, regarded as an ontological theory.

\begin{example}[$\psi$-complete Quantum Mechanics]\index{$\psi$-completeness}
The ontic state is identified with the quantum state. A preparation produces a density matrix, which is regarded as a distribution over the projective Hilbert space associated with the system. By construction, the operational probabilities are those given by the Born rule.
\end{example}

Of course, quantum mechanics, treated as an ontological theory in itself in this way, has certain non-intuitive features. Einstein, Podolsky \& Rosen provided one early discussion of the fact \cite{einstein:35}; but later results such as Bell's theorem \cite{bell:64} and the Kochen-Specker theorem \cite{kochen:75} clarified the fact that non-locality and contextuality are necessary features of any theory that can account for quantum mechanical predictions. In order to address these issues, we point out some relevant properties that ontological models may have.

\begin{definition}\index{$\lambda$-independence}
An ontological model is \emph{$\lambda$-independent} if and only if the distributions over $\Lambda$ induced by each preparation $p \in P$ do not depend on the joint measurement $\overline{m} \in \mathcal{M}$ to be performed.
\end{definition}

We have already implicitly assumed this in definition \ref{def:hv}, but it is worth making it clear since it is a crucial assumption in all of the familiar no-go theorems. In a $\lambda$\emph{-dependent} model, on the other hand, the probabilities of being in the various ontic states depend on both the preparation of the system and the joint measurement being performed, and we would have $h(\lambda\mid p,\overline{m})$ rather than $h(\lambda\mid p)$ in equation (\ref{hv}).

\begin{definition}
An ontological model is \emph{deterministic} if and only if for each $\lambda \in \Lambda$ and set of compatible measurements $\overline{m} \in \mathcal{M}$ there exists some joint outcome $\overline{o} \in \mathcal{E}(\overline{m})$ such that $h(\overline{o}\mid \overline{m},\lambda)=1$.
\end{definition}

In such a model, the outcome to any measurement that can be performed on an ontic state is determined with certainty.

For any distribution $h(\overline{o}\mid \overline{m},\lambda)$ over joint outcomes $\overline{o} \in \mathcal{E}(\overline{m})$ to the joint measurement $\overline{m} \in \M$ on the ontic state $\lambda \in \Lambda$, we can find a distribution $h(o\mid m, \lambda)$ over outcomes $o \in O$ to any individual measurement $m \in \overline{m}$ by marginalisation. 

\begin{definition}
An ontological model is \emph{parameter-independent} if and only if the probability distribution $h(o\mid m, \lambda)$ over $O$ is well-defined for each $m \in X$ and $\lambda \in \Lambda$.
\end{definition}

By well-definedness we mean that the same marginal distribution $h(o\mid m, \lambda)$ is obtained regardless of which set of joint of measurements we marginalise from (in the case that $m \in \overline{m}$ and $m \in \overline{m}'$ for example). Parameter independence thus asserts that the probabilities of outcomes to a particular measurement do not depend on the other measurements being performed. Essentially, it imposes no-signalling \cite{ghirardi:80} with respect to the ontic states.

\begin{definition}\index{locality \& non-contextuality!ontological model/theory}
An ontological model is \emph{local/non-contextual} if and only if it is both deterministic and parameter-independent; empirical correlations are \emph{local (non-contextual)} if and only if they can be realised by a local (non-contextual) model.
\end{definition}

This says that for each ontic state there is a certain outcome to any measurement that can be performed, and that this does not depend on which other measurements are made. The term local is generally only used when the system being modelled is spatially distributed; where such an arrangement is not assumed, the model is said to be non-contextual.

We draw attention to the fact that another definition of locality that is common in the literature concerns the factorisability of the distributions
\begin{equation}\label{eq:bellloc}
h(\overline{o}\mid \overline{m},\lambda) = \prod_{m \in \overline{m}} \, h(o\mid m,\lambda).
\end{equation}
While the present definition may be less familiar, it is important to note that these definitions were shown to be equivalent, in the sense that they generate the same sets of empirical models, in \cite{abramsky:11}, which built on work by Fine \cite{fine:82} that was specific to the $(2,2,2)$ Bell scenario.

\section{Observable Properties}\label{observables}

If we assume that the outcomes of measurements provide the values of properties of a system, then for each measurement $m \in X$ there should exist an $O$-valued property $f_m: \Lambda \rightarrow \mathcal{D}(O)$ such that $f_m(\lambda) (o) = h(o\mid m,\lambda)$ for all $\lambda \in \Lambda$ and $o \in O$.
\begin{definition}\label{def:obsprop}
The \emph{observable properties} of an ontological model $h$ over $\Lambda$ are the $O$-valued properties $f_m: \Lambda \rightarrow \mathcal{D}(O)$ given by
\begin{equation}\label{prophv}
f_m(\lambda) (o) := h(o\mid m,\lambda)
\end{equation}
for each $m \in X$ such that the marginal $h(o\mid m,\lambda)$ is well-defined.
\end{definition}

\begin{theorem}\label{nogothm}\index{locality \& non-contextuality!observable properties}
An ontological model is local/non-contextual if and only if all measurements are of ontic observable properties.
\end{theorem}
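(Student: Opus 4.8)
The plan is to unwind the definitions on both sides and observe that they match up almost verbatim, with a single genuine step of content. First note that, by the definition of parameter independence together with Definition~\ref{def:obsprop}, an ontological model is parameter-independent precisely when the observable property $f_m$ is well-defined for every $m \in X$. So the theorem reduces to showing that, assuming parameter independence, the model is deterministic if and only if each $f_m$ is ontic in the sense of Definition~\ref{def:ontepi}, i.e. $f_m(\lambda)$ is a delta function for all $\lambda \in \Lambda$. I will assume throughout that $\M$ covers $X$, so that every measurement belongs to at least one joint measurement.

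For the forward direction, suppose the model is local/non-contextual, hence deterministic and parameter-independent. Parameter independence gives that every $f_m$ is well-defined. Fix $m \in X$ and $\lambda \in \Lambda$, and choose some $\overline{m} \in \M$ with $m \in \overline{m}$. By determinism there is a joint outcome $\overline{o} \in \E(\overline{m})$ with $h(\overline{o} \mid \overline{m}, \lambda) = 1$; marginalising this point mass onto the coordinate $m$ shows that $h(o \mid m, \lambda)$ is the point mass at $\overline{o}(m)$, so $f_m(\lambda)$ is a delta function. Hence every $f_m$ is ontic, i.e. all measurements are of ontic observable properties.

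For the converse, suppose all measurements are of ontic observable properties; in particular every $f_m$ is well-defined, so the model is parameter-independent. To obtain determinism, fix $\lambda \in \Lambda$ and $\overline{m} = \{m_1, \dots, m_k\} \in \M$. By onticity, for each $i$ the distribution $f_{m_i}(\lambda)$ is the point mass at some $o_i := \widehat{f_{m_i}}(\lambda) \in O$, i.e. $h(o_i \mid m_i, \lambda) = 1$. Since $h(o \mid m_i, \lambda) = \sum_{\overline{o} \in \E(\overline{m}),\, \overline{o}(m_i) = o} h(\overline{o} \mid \overline{m}, \lambda)$ and all these summands are non-negative, $h(o_i \mid m_i, \lambda) = 1$ forces $h(\overline{o} \mid \overline{m}, \lambda) = 0$ for every $\overline{o}$ with $\overline{o}(m_i) \neq o_i$. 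As this holds for every $i$, the distribution $h(\,\cdot \mid \overline{m}, \lambda)$ is supported on the single joint outcome $\overline{o}^{*}$ determined by $\overline{o}^{*}(m_i) = o_i$ for all $i$; being a probability distribution it then satisfies $h(\overline{o}^{*} \mid \overline{m}, \lambda) = 1$. This is exactly determinism, so the model is local/non-contextual.

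The argument is essentially routine; the only points requiring care are the tacit assumption that every measurement lies in some context of $\M$ (needed for the forward direction to invoke determinism), and the elementary observation in the converse that a joint distribution each of whose single-measurement marginals is a point mass must itself be a point mass — this is where the two `coordinates' of the local/non-contextual condition are genuinely combined, and it is the one place the proof does more than rearrange definitions. For infinite $\Lambda$ the marginalisation steps should of course be read in the measure-theoretic sense flagged after~(\ref{distdef}).
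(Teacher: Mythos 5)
Your proof is correct and follows essentially the same route as the paper: it splits local/non-contextual into parameter independence (matched with well-definedness of the $f_m$, hence with ``all measurements are of observable properties'') and determinism (matched with onticity of the $f_m$). The one difference is that the paper disposes of the determinism--onticity equivalence with the single biconditional $h(o\mid m,\lambda)=1 \Leftrightarrow f_m(\lambda)(o)=1$, whereas you correctly supply the small but genuine step it elides --- passing between a delta on \emph{joint} outcomes and deltas on each single-measurement marginal --- together with the tacit covering assumption on $\M$; both additions are sound and make the argument more complete.
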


\begin{proof}
First, we claim that a model is deterministic if and only if its observable properties are ontic. This holds since, by (\ref{prophv}),
\[
h(o\mid m,\lambda)=1 \qquad \Leftrightarrow \qquad f_m(\lambda) (o) = 1.
\]
Next, we claim that a model is parameter independent if and only if all measurements are of observable properties. This holds since, by definition \ref{def:obsprop}, all measurements are of observable properties if and only if all marginals $h(o\mid m,\lambda)$ are well-defined. The result follows.
\end{proof}

This is another characterisation of locality, which falls out easily from the definitions. It is similar to the Kochen-Specker \cite{kochen:75} or topos approach \cite{isham:98} treatment of non-contextuality. It can provide an alternative and sometimes simpler approach to certain results. The first result we mention shows that local ontological models have a canonical form. In fact, it shows that local ontological or hidden variable models can equivalently be expressed as distributions over the set of global assignments. (In this sense it shows how local ontological models are subsumed by the sheaf-theoretic approach.) It has recently been proved in measure theoretic generality \cite{brandenburger:13}, and can also be seen to generalise earlier work by Fine \cite{fine:82}. An interesting, related point is that, by allowing for negative probabilities, these canonical models can also generate all no-signalling correlations \cite{abramsky:11,abramsky:14,mansfield:13t}.

\begin{theorem}\label{thm:canonical}
Local models can be expressed in a \emph{canonical form}, with an ontic state space $\Omega := \mathcal{E}(X)$, and probabilities
\[
h(\overline{o}\mid \overline{m},\omega) = \prod_{m \in \overline{m}} \, \delta \left( \omega(m), \overline{o}(m) \right)
\]
for all $\overline{m} \in \mathcal{M}$, $\overline{o} \in \mathcal{E}(\overline{m})$, and $\omega \in \Omega$.
\end{theorem}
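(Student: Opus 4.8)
The plan is to start from an arbitrary local model $h$ over some state space $\Lambda$ and to build from it a model in canonical form that reproduces exactly the same operational probabilities. The bridge is Theorem \ref{nogothm}: since $h$ is local, all of its observable properties $f_m$ are ontic, so each is generated by a function $\widehat{f_m}: \Lambda \rightarrow O$. This immediately suggests the map $r: \Lambda \rightarrow \Omega = \mathcal{E}(X)$ defined by $r(\lambda)(m) := \widehat{f_m}(\lambda)$, sending each ontic state to the global assignment recording the deterministic outcome it prescribes to every measurement. (Here I would record the standing assumption — already implicit in parameter-independence being asserted for every $m \in X$ — that each $m$ lies in at least one context $\overline{m} \in \mathcal{M}$, so that $r(\lambda)$ is defined on all of $X$.)

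The key lemma is that for a local model the on-state statistics already have the product-of-deltas shape, with the deltas read off from $r$: for every $\overline{m} \in \mathcal{M}$, $\overline{o} \in \mathcal{E}(\overline{m})$ and $\lambda \in \Lambda$,
\[
h(\overline{o}\mid\overline{m},\lambda) = \prod_{m\in\overline{m}} \delta\left(r(\lambda)(m),\, \overline{o}(m)\right).
\]
To see this, determinism says $h(\,\cdot\mid\overline{m},\lambda)$ is concentrated on a single joint outcome $\overline{o}^\ast$; marginalising to an individual $m \in \overline{m}$ and using parameter-independence — i.e. well-definedness of $h(o\mid m,\lambda) = f_m(\lambda)(o)$ — forces $\overline{o}^\ast(m) = \widehat{f_m}(\lambda) = r(\lambda)(m)$ for each $m \in \overline{m}$, hence $\overline{o}^\ast = r(\lambda)|_{\overline{m}}$. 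This is also precisely the factorisability property (\ref{eq:bellloc}).

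With this in hand I would define the canonical model $h'$ over $\Omega$ by pushing the preparation distributions forward along $r$, setting $h'(\omega\mid p) := \sum_{\lambda \in r^{-1}(\omega)} h(\lambda\mid p)$, and by declaring the on-state statistics to be the deterministic, non-contextual ones, $h'(\overline{o}\mid\overline{m},\omega) := \prod_{m\in\overline{m}}\delta\left(\omega(m),\overline{o}(m)\right)$. Operational equivalence is then a short computation via (\ref{hv}): summing $h'(\overline{o}\mid\overline{m},\omega)\, h'(\omega\mid p)$ over $\omega$ restricts the sum to those $\omega$ agreeing with $\overline{o}$ on $\overline{m}$, which by the lemma and the definition of $r$ is the same event as $\{\lambda : h(\overline{o}\mid\overline{m},\lambda) = 1\}$, so the two resulting expressions for $h(\overline{o}\mid\overline{m},p)$ coincide.

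The routine part is this final computation; the genuine subtlety lies in the infinite case, where $r$ must be shown to be measurable and the pushforward of the preparation measures must replace the sum over fibres — this is exactly the measure-theoretic refinement carried out in \cite{brandenburger:13}, which I would invoke rather than reprove. A secondary point worth stating explicitly is the converse: any model of this canonical form is automatically local, being deterministic and parameter-independent by construction, so the canonical form is a genuine normal form for the class of local models.
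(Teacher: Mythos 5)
Your proposal is correct and follows essentially the same route as the paper: map each ontic state to the global assignment $\lambda \mapsto (m \mapsto \widehat{f_m}(\lambda))$ via Theorem \ref{nogothm}, push the preparation distributions forward along this map, and check that the operational probabilities are preserved. The only cosmetic difference is that you derive the product-of-deltas form of $h(\overline{o}\mid\overline{m},\lambda)$ directly from determinism plus parameter-independence, whereas the paper's proof invokes the equivalence with the factorisability condition (\ref{eq:bellloc}); both are fine.
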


\begin{proof}
By theorem \ref{nogothm}, a local model $h$ over $\Lambda$ has a set $\{\widehat{f_m}:\Lambda \rightarrow O\}_{m \in X}$ of ontic observable properties. For each $\lambda \in \Lambda$, we define a function $\omega_\lambda \in \mathcal{E}(X)$ by $\omega_{\lambda}(m) := \widehat{f_m}(\lambda)$. Then the function $c:\Lambda \rightarrow \mathcal{E}(X)$ defined by $c(\lambda) := \omega_\lambda$ takes the original to the canonical ontic state space.

We first prove the claim that if $\lambda, \lambda' \in c^{-1}(\omega)$ for some $\omega \in \mathcal{E}(X)$ then
\[
h(\overline{o}\mid \overline{m},\lambda) = h(\overline{o}\mid \overline{m},\lambda')
\]
for all $\overline{m} \in \mathcal{M}$ and $\overline{o} \in \mathcal{E}(\overline{m}$). Since $\lambda, \lambda' \in c^{-1}(\omega)$, then $\omega_\lambda = \omega_{\lambda'}$, and therefore $\widehat{f_m}(\lambda) = \widehat{f_m}(\lambda')$ for all $m\in X$. It follows that
\begin{align*}
h(\overline{o}\mid \overline{m},\lambda) &= \prod_{m \in \overline{m}} \, h(\overline{o}(m)\mid m, \lambda) \\
&= \prod_{m \in \overline{m}} \, f_m(\lambda) \left( \overline{o}(m) \right) \\
&= \prod_{m \in \overline{m}} \, \delta \left( \widehat{f_m}(\lambda), \overline{o}(m) \right) \\
&= \prod_{m \in \overline{m}} \, \delta \left( \widehat{f_m}(\lambda'), \overline{o}(m) \right) \\
&= \cdots \\
&= h(\overline{o}\mid \overline{m},\lambda'),
\end{align*}
where the first equality can be shown to hold by locality.

The canonical model $h$ over $\Omega$ is defined by
\[
h(\overline{o}\mid \overline{m},\omega) := h(\overline{o}\mid \overline{m},\lambda_\omega)
\]
and
\[
h(\omega\mid p) := \sum_{\lambda \in c^{-1}(\omega)} \, h(\lambda\mid p)
\]
for all $\overline{m} \in \mathcal{M}$, $\overline{o} \in \mathcal{E}(X)$, $\omega \in \Omega$, $\lambda \in \Lambda$, and any $\lambda_\omega \in c^{-1}(\omega)$. The canonical model realises the same operational probabilities as the original, since
\begin{align*}
\sum_{\omega \in \Omega} \, & h(\overline{o}\mid \overline{m},\omega) \, h(\omega\mid p) \\ &= \sum_{\omega \in \Omega} \, \left(h(\overline{o}\mid \overline{m},\lambda_\omega) \,\sum_{\lambda \in c^{-1}(\omega)} \, h(\lambda\mid p)\right) \\
&= \sum_{\omega \in \Omega} \, \sum_{\lambda \in c^{-1}(\omega)} \, h(\overline{o}\mid \overline{m},\lambda) \, h(\lambda\mid p) \\
&= \sum_{\lambda \in \Lambda} \, h(\overline{o}\mid \overline{m},\lambda) \, h(\lambda\mid p),
\end{align*}
where the second equality holds by the previous claim. Moreover, the operational probabilities can be simplified as follows.
\begin{align*}
h(\overline{o}\mid \overline{m},\omega) &=  h(\overline{o}\mid \overline{m},\lambda_\omega) \\
&= \prod_{m \in \overline{m}} \, \delta \left( \widehat{f_m}(\lambda_\omega), \overline{o}(m) \right) \\
&= \prod_{m \in \overline{m}} \, \delta \left( \omega(m), \overline{o}(m) \right)
\end{align*}
\end{proof}

The next proposition will not be surprising in light of the EPR argument \cite{einstein:35}. It shows that if one were to take the view that quantum mechanics is $\psi$-complete then all non-trivial observables are epistemic or inherently probabilistic. Indeed, we can obtain a re-statement of the EPR result as a corollary.

\begin{proposition}\label{incomp}\index{$\psi$-completeness}
Any non-trivial quantum mechanical observable is epistemic with respect to $\psi$-complete quantum mechanics.
\end{proposition}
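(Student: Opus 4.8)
The plan is to produce, for an arbitrary non-trivial observable, a single pure state whose Born-rule distribution over the measurement outcomes fails to be a delta function; by Definition~\ref{def:ontepi} this immediately exhibits the associated observable property as epistemic.

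First I would set out the data explicitly. In $\psi$-complete quantum mechanics the ontic state space $\Lambda$ is the projective Hilbert space of the system, so a pure state $\ket{\psi}$ (strictly, its ray) is itself an ontic state; and for a measurement $m$ with POVM $\{E^m_o\}_{o \in O}$, Definition~\ref{def:obsprop} gives the observable property $f_m$ with $f_m(\ket{\psi})(o) = \bra{\psi} E^m_o \ket{\psi}$ — a genuine function of the ray, whose marginals are well-defined since quantum mechanics is no-signalling, so that $f_m$ is indeed an observable property in the sense of Definition~\ref{def:obsprop}. I would read ``non-trivial'' as the statement that $m$ is not the constant measurement, i.e.\ that some outcome $o_0$ has an effect $E^m_{o_0}$ that is neither $0$ nor the identity $I$ (equivalently, the observable is not a scalar multiple of the identity; equivalently, not every state yields a certain outcome).

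Next I would exhibit the witnessing state. For a sharp observable this is completely explicit: if $E^m_{o_0}$ is a projector that is neither $0$ nor $I$, pick unit vectors $\ket{a}$ in its range and $\ket{b}$ in its kernel and set $\ket{\psi} = \tfrac{1}{\sqrt{2}}(\ket{a} + \ket{b})$, so that $f_m(\ket{\psi})(o_0) = \bra{\psi} E^m_{o_0} \ket{\psi} = \tfrac{1}{2}$. For a general POVM element the same conclusion comes from continuity: the map $\ket{\psi} \mapsto \bra{\psi} E^m_{o_0} \ket{\psi}$ is continuous on the projective Hilbert space, which is path-connected, and since $0 \le E^m_{o_0} \le I$ its values form an interval in $[0,1]$ containing a value $> 0$ (because $E^m_{o_0} \neq 0$) and a value $< 1$ (because $E^m_{o_0} \neq I$); by connectedness this interval meets $(0,1)$, so there is a ray with $0 < \bra{\psi} E^m_{o_0} \ket{\psi} < 1$. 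Either way, at this $\ket{\psi}$ the outcome $o_0$ has probability strictly between $0$ and $1$, hence $\sum_{o \neq o_0} f_m(\ket{\psi})(o) > 0$ and some other outcome also receives positive probability; thus $f_m(\ket{\psi})$ is not a delta function, so $f_m$ is epistemic by Definition~\ref{def:ontepi}, and measuring $m$ is measuring an epistemic property.

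The only real subtlety — the ``main obstacle'', such as it is — is fixing the right reading of ``non-trivial'' and making the argument uniform over general POVMs rather than just projective measurements; the continuity/connectedness observation handles both at once. Everything else is bookkeeping: one should note that $\Lambda$ here is genuinely the (infinite) projective Hilbert space, so Definition~\ref{def:ontepi} is applied directly and there is no appeal to the finite-$\Lambda$ Proposition~\ref{prop:hscor}. Finally, I would remark that specialising to a maximally entangled bipartite state and to the local measurements on its two wings converts the above into the advertised restatement of EPR: each local observable is epistemic, and yet its value is perfectly predicted by a suitable measurement on the far wing.
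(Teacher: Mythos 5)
Your proof is correct and is essentially the paper's argument: both exhibit a pure state (an equal superposition of vectors associated to two distinct outcomes) on which the Born-rule distribution is not a delta function, so the observable property fails Definition~\ref{def:ontepi}. The paper phrases this via two eigenvectors of $\hat{A}$ with distinct eigenvalues and any $\ket{\psi}$ overlapping both; your range/kernel construction and the continuity argument for general POVMs are a mild extra generality over the paper's projective case, but not a different route.
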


\begin{proof}
Any observable $\hat{A} \neq \mathbf{I}$ has eigenvectors, say $\left|  v_1 \right>$ and $\left|v_2 \right>$, corresponding to distinct eigenvalues, say $o_1$ and $o_2$. Consider any state $\left| \psi \right>$ such that $\left< v_1 | \psi \right> >0$ and $\left< v_2 | \psi \right> >0$. In a $\psi$-complete model, the wavefunction is the ontic state, so $\lambda = \left| \psi \right>$. Then
\[
f_{\hat{A}} (\lambda) (o_1) = h(o_1\mid \hat{A},\lambda) = \left| \left< v_1 | \psi  \right> \right|^2 >0,
\]
and similarly $f_{\hat{A}} (\lambda) (o_2) >0$. Therefore $f_{\hat{A}}$ is epistemic.
\end{proof}

\begin{corollary}[EPR]\label{prebell}\index{EPR argument}
Under the assumption of locality, quantum mechanics cannot be $\psi$-complete.
\end{corollary}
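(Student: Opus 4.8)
The plan is to obtain the corollary as an immediate contradiction from Proposition \ref{incomp} and Theorem \ref{nogothm}. First I would suppose, for contradiction, that quantum mechanics may be regarded as a $\psi$-complete ontological model that is, in addition, local in the sense of the previous section. By Theorem \ref{nogothm}, locality forces every measurement to be a measurement of an \emph{ontic} observable property of this model, i.e. every $f_m$ is generated by a function $\widehat{f_m}:\Lambda \rightarrow O$.

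Next I would invoke Proposition \ref{incomp}. Fix any non-trivial observable $\hat{A} \neq \mathbf{I}$; such observables exist in every system of dimension at least two, so the hypothesis is not vacuous. Proposition \ref{incomp} shows that $f_{\hat{A}}$ is epistemic with respect to the $\psi$-complete model, directly contradicting the conclusion of the first step. Hence no such model can exist, and under the assumption of locality quantum mechanics cannot be $\psi$-complete.

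The argument is short once the two preceding results are in hand, so the only real work — and where I would be careful — is conceptual: pinning down what ``quantum mechanics is local'' is meant to assert. Locality as defined is a property of an ontological \emph{model} (equivalently, determinism together with parameter independence), not of the bare operational predictions, and it only acquires nontrivial content in a spatially distributed, e.g. bipartite, scenario. The clean reading that makes the corollary immediate is that the specific $\psi$-complete model is being posited to satisfy this definition; since Proposition \ref{incomp} already exhibits a failure of determinism, the measure-theoretic caveat about distributions over projective Hilbert space does not even enter, the reasoning there working pointwise with the ontic state $\lambda = \left| \psi \right>$. If one wished to recover the original flavour of the EPR reasoning more explicitly, I would instead spell out the composite-system version — take an entangled state and a local measurement on one wing whose outcome is perfectly correlated with a measurement on the other — but the one-line argument above already suffices.
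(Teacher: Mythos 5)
Your proposal is correct and follows essentially the same route as the paper: the paper's proof likewise combines Proposition \ref{incomp} (every non-trivial observable is epistemic in the $\psi$-complete model) with Theorem \ref{nogothm} to conclude that $\psi$-complete quantum mechanics is not local, merely stated in contrapositive form rather than as an explicit contradiction. Your additional remarks on what ``local'' means here are sensible clarifications but not a departure from the paper's argument.
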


\begin{proof}
By proposition \ref{incomp}, any non-trivial quantum observable is epistemic with respect to $\psi$-complete quantum mechanics. Therefore, by theorem \ref{nogothm}, $\psi$-complete quantum mechanics is not local.
\end{proof}

This is the same result that was argued for by EPR, though this proof has more in common with an earlier argument by Einstein at the 1927 Solvay conference \cite{bacciagaluppi:10}, and also with a more recent, general treatment found in \cite{brandenburger:08} and in \cite{abramsky:10}.

\section{The PBR Theorem}\label{sec:pbr}

In this section we briefly make some observations relating to the PBR theorem, which deals with the reality (i.e. onticity in the sense of definitions \ref{def:hs} and \ref{def:ontepi}) of the wavefunction. One of the assumptions for this result is \emph{preparation independence} \cite{pusey:12}: \begin{quote} systems that are prepared independently have independent physical states.\end{quote} The other assumptions are implicit in the present framework.

\begin{theorem}[PBR]\label{pbrthm}
For any preparation independent theory that reproduces (a certain set of) quantum correlations, the wavefunction is ontic.
\end{theorem}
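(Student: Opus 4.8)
The plan is to reconstruct the Pusey--Barrett--Rudolph argument in the present notation, exploiting the formulation of preparation independence as the statement that independently prepared systems induce a product distribution on the joint ontic state space. First I would suppose, for a contradiction, that the wavefunction is epistemic in the sense of Definition \ref{def:hs}: there exist two non-orthogonal states $\ket{\psi_0} \neq \ket{\psi_1}$ of a single system whose induced distributions $\mu_{\ket{\psi_0}}$ and $\mu_{\ket{\psi_1}}$ have overlapping supports, so that there is a region $\Delta \subseteq \Lambda$ of non-zero measure under both. By rescaling we may take the overlap probability to be some $q > 0$ for each of the two states.

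Next I would pass to $n$ copies of the system. By preparation independence, preparing system $i$ in state $\ket{\psi_{x_i}}$ (with $x_i \in \{0,1\}$) induces the product distribution $\prod_i \mu_{\ket{\psi_{x_i}}}$ on $\Lambda^n$. Hence for \emph{any} bit string $x \in \{0,1\}^n$ there is probability at least $q^n > 0$ that the joint ontic state lands in $\Delta^n$ — that is, in a region compatible with every one of the $2^n$ product preparations simultaneously. The key step is then to invoke the algebraic fact underlying PBR: for non-orthogonal $\ket{\psi_0}, \ket{\psi_1}$, one can choose $n$ large enough and an entangled measurement $\overline{m}$ on the $n$ copies — an $n$-fold analogue of the two-qubit measurement in \cite{pusey:12} — with $2^n$ outcomes, such that outcome $o_x$ has Born-rule probability zero on the product state $\ket{\psi_{x_1}} \otimes \cdots \otimes \ket{\psi_{x_n}}$. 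Since the theory reproduces these quantum correlations, and since (via (\ref{hv})) the operational probability is the $\mu$-average of $h(o_x \mid \overline{m}, \lambda)$, a zero operational probability on a preparation whose distribution gives positive weight to $\Delta^n$ forces $h(o_x \mid \overline{m}, \lambda) = 0$ for almost every $\lambda \in \Delta^n$. Running this over all $x$ gives, for almost every $\lambda \in \Delta^n$, that $h(o_x \mid \overline{m}, \lambda) = 0$ for all $2^n$ outcomes $o_x$ of $\overline{m}$ simultaneously — contradicting the requirement that $h(-\mid \overline{m},\lambda)$ is a probability distribution over $\E(\overline{m})$, provided the $2^n$ excluded outcomes exhaust the outcome set, which the PBR construction arranges.

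The main obstacle — and the only genuinely non-trivial ingredient — is establishing the existence of the entangled measurement with the stated orthogonality-to-product-states property for $n$ chosen as a function of the overlap $|\braket{\psi_0}{\psi_1}|$. This is precisely the content of the PBR construction: for $|\braket{\psi_0}{\psi_1}|^2 \leq 2^{1/n} - 1$ the four-outcome two-qubit measurement generalises to a $2^n$-outcome measurement on the $n$-fold tensor product, each of whose projectors is orthogonal to exactly one of the $2^n$ product vectors $\ket{\psi_{x_1}\cdots\psi_{x_n}}$. I would either cite this directly from \cite{pusey:12} or, if a self-contained treatment is wanted, exhibit the measurement explicitly via the same unitary-disentangling trick they use. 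Everything else is bookkeeping: translating ``overlapping supports'' into a positive-measure common region, propagating it to the tensor product through preparation independence, and reading off the contradiction from (\ref{hv}). A remark afterward would note that the finiteness caveat attached to (\ref{distdef}) is not needed here, since the argument runs directly with the $\mu_{\ket{\psi}}$ as primitive, and that this is one of the advertised advantages of working with Definition \ref{def:ontepi} and its reformulation.
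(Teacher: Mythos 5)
The paper does not prove Theorem \ref{pbrthm} at all: it is stated as an imported result, credited to \cite{pusey:12}, and section \ref{sec:pbr} then analyses the preparation-independence hypothesis rather than rederiving the theorem. Your proposal is therefore not competing with an in-paper argument but reconstructing the original PBR proof, and as a sketch it is faithful to it: epistemic overlap yields a region $\Delta$ of weight $q>0$ under both $\mu_{\ket{\psi_0}}$ and $\mu_{\ket{\psi_1}}$; preparation independence propagates this to weight at least $q^n>0$ on $\Delta^n$ under each of the $2^n$ product preparations; and the antidistinguishing entangled measurement forces all $2^n$ outcome probabilities to vanish on a common positive-measure set, contradicting normalisation. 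That is the right architecture, and deferring the existence of the measurement to the explicit construction in \cite{pusey:12} is a legitimate way to close the only non-trivial step.

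Three caveats are worth recording. First, Definition \ref{def:hs} only hands you two \emph{distinct} states with overlapping supports; you should dispose of the orthogonal case separately (orthogonal states are perfectly distinguishable by a single projective measurement, so any theory reproducing quantum statistics already forbids overlap for them) before asserting non-orthogonality. Second, the threshold you quote, $|\braket{\psi_0}{\psi_1}|^2 \le 2^{1/n}-1$, should be checked against the original: the condition in \cite{pusey:12} is stated in terms of $\tan(\theta/2)\le 2^{1/n}-1$ with $\cos\theta = |\braket{\psi_0}{\psi_1}|$; since you are citing the construction anyway, either quote it exactly or merely assert that a suitable $n$ exists for every non-orthogonal pair. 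Third, your closing remark that the measure-zero caveat ``is not needed here'' is too quick. The passage from a zero operational probability to $h(o_x\mid\overline{m},\overline{\lambda})=0$ for almost every $\overline{\lambda}\in\Delta^n$, followed by intersecting $2^n$ such almost-everywhere statements taken with respect to $2^n$ \emph{different} product measures, is precisely where measure-theoretic care is required in the continuum case; this is one of the technical irritations the paper's reformulated criterion (Definition \ref{def:ontepi}) is advertised as avoiding. None of this affects the finite-$\Lambda$ case or the logical structure of your argument.
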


The preparation independence assumption is concerned with the composition of systems and has not appeared in previous no-go results. We will attempt to give this a more careful treatment. First of all, the PBR theorem describes a \emph{preparation scenario}\index{preparation scenario}. Generalising, this can be thought of as a kind of dual to a measurement scenario, in which the preparations $P$ play the role of measurements and the ontic states $\Lambda$ play the role of outcomes. Just as we had a compatibility structure $\M$ for measurements, which in Bell scenarios allowed us to chose one measurement from each site, we should in general have a compatibility structure $\mathcal{P}$ for preparations, which in the case of the PBR result allows us to chose one preparation from each site. We should allow for joint ontic states $\overline{\lambda}: \overline{p} \rightarrow \Lambda$, just as we allowed for joint outcomes. Similarly to before, we will take $\overline{\lambda}$ to denote a tuple of joint hidden variables, and $\overline{p}$ to denote a tuple of joint preparations, one for each site. The definitions of an ontological model and the properties from section \ref{ontsect} can be modified in the obvious way to account for this additional structure.

\begin{definition}
An ontological theory $h$ over $\Lambda$ is \emph{preparation independent} if and only if we can factor
\begin{equation}\label{eq:lsep}
h(\overline{\lambda} \mid \overline{p}) = \prod_{p \in \overline{p}} \, h( \lambda_p \mid \overline{p})
\end{equation}
for all $\overline{p} \in \mathcal{P}$, where $\lambda_p := \overline{\lambda}|_{p}$.
\end{definition}

Presented in this way, preparation independence (\ref{eq:lsep}) in a preparation scenario is clearly seen to be analogous to non-contextuality or Bell locality (\ref{eq:bellloc}) in a measurement scenario. An intriguing question is what happens if this is relaxed to an assumption analogous to no-signalling, in which we only assume that the marginal distributions $h(\lambda_p \mid \overline{p})$ are well-defined: a sort of `no-preparation-signalling' assumption.

\begin{definition}
An ontological theory $h$ over $\Lambda$ is \emph{no-preparation-signalling} if and only if the marginal probabilities $h(\lambda_p, \overline{p})$ are well-defined.
\end{definition}

In this case, it is easy to see that the PBR argument  of \cite{pusey:12} no longer holds. It is true that the relaxed assumption would allow for global or non-local correlations in the joint ontic state $\overline{\lambda}$; but perhaps, in light of the Bell and Kochen-Specker theorems, this should not be so surprising. An important question that remains to be answered, therefore, is whether by another argument a result similar to (or indeed counter to) that of PBR can be proved.

Another observation, which is also pointed out in \cite{harrigan:10}, is that onticity of the wavefunction is actually inconsistent with locality. This can be demonstrated as a consequence of what Schr\"{o}dinger called \emph{steering} \cite{schrodinger:36}. If a local measurement in the basis $\{\ket{0},\ket{1}\}$ is made on the first qubit of the state
\[ \ket{\phi^+} = \frac{1}{\sqrt{2}} \left( \ket{00} + \ket{11} \right) \]
then this can be considered as a remote preparation of the second qubit in one of the states $\ket{0}$ or $\ket{1}$, and similarly for a measurement in the basis $\{ \ket{+}, \ket{-} \}$. If the second sub-system has an ontic state $\lambda$ that is independent of measurements made elsewhere, then $\lambda$ must be consistent with one state from each of the sets $\{ \ket{0}, \ket{1} \}$ and $\{ \ket{+}, \ket{-} \}$, but this contradicts the onticity of the wavefunction.

We therefore arrive at the following theorem, which we propose to think of as a weak Bell theorem, since it draws the same conclusion as Bell's theorem \cite{bell:64} but with the extra assumption of preparation independence.

\begin{theorem}\label{thm:wkbell}
Quantum mechanics is not realisable by any preparation independent, local ontological theory.
\end{theorem}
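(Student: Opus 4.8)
The plan is to derive a contradiction by combining the PBR theorem with the incompatibility between $\psi$-onticity and locality that was sketched informally above via steering (and noted in \cite{harrigan:10}). So I would suppose, for contradiction, that there is an ontological theory $h$ over some $\Lambda$ that is preparation independent, local (hence deterministic and parameter-independent), and reproduces the quantum correlations. Since $h$ is preparation independent and in particular reproduces the family of correlations used in \cite{pusey:12}, Theorem \ref{pbrthm} applies, and tells us that the wavefunction is ontic in the sense of Definitions \ref{def:hs} and \ref{def:ontepi}: the induced distributions $\mu_{\ket\psi}$ and $\mu_{\ket\phi}$ have disjoint supports whenever $\ket\psi\neq\ket\phi$.

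Next I would make the steering argument precise within the framework. Consider the two-qubit system prepared in $\ket{\phi^+}$, with the second qubit held by Bob. A measurement by Alice in the basis $\{\ket0,\ket1\}$, together with its outcome, constitutes an effective preparation procedure for Bob's qubit, yielding $\ket0$ or $\ket1$ each with probability $\tfrac12$; likewise a measurement in $\{\ket+,\ket-\}$ yields $\ket+$ or $\ket-$ each with probability $\tfrac12$. Consequently the distribution over Bob's ontic states, conditioned on Alice's choice of basis but averaged over her outcome, equals $\tfrac12\mu_{\ket0}+\tfrac12\mu_{\ket1}$ in the first case and $\tfrac12\mu_{\ket+}+\tfrac12\mu_{\ket-}$ in the second. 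Now locality --- specifically the parameter-independence (equivalently $\lambda$-independence) component, which imposes no-signalling at the level of the ontic states --- forces Bob's reduced ontic-state distribution to be insensitive to Alice's free choice of which basis to measure. Hence $\tfrac12\mu_{\ket0}+\tfrac12\mu_{\ket1}=\tfrac12\mu_{\ket+}+\tfrac12\mu_{\ket-}$, so these two mixtures share the same non-empty support. This means some $\lambda$ lies both in the union of the supports of $\mu_{\ket0}$ and $\mu_{\ket1}$ and in the union of the supports of $\mu_{\ket+}$ and $\mu_{\ket-}$; since $\ket0,\ket1,\ket+,\ket-$ are pairwise distinct, this contradicts the onticity of the wavefunction obtained above, and the theorem follows.

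The final step --- passing from overlapping mixtures to overlapping supports of two genuinely distinct pure states --- is routine bookkeeping. The main obstacle is the formalisation in the second paragraph: one has to justify, inside the present definitions, that ``local'' (deterministic plus parameter-independent) really does entail that the Bob-marginal of the joint ontic state cannot depend on Alice's remote choice of measurement. This requires treating Alice's measurement-and-outcome as a bona fide preparation of Bob's subsystem and invoking the steering equivalence, and it is precisely the point at which the no-signalling-style reading of parameter independence is used; care is needed so that nothing stronger than locality is smuggled in. Once that is in place the conclusion coincides with Bell's, while the hypothesis of preparation independence has entered only through its role in making Theorem \ref{pbrthm} available --- which is exactly the sense in which this is a \emph{weak} Bell theorem.
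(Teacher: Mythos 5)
Your proof takes exactly the route the paper does: its entire proof of Theorem \ref{thm:wkbell} is the one-line observation that the result ``follows from the PBR theorem and the occurrence of steering in quantum mechanics,'' relying on the steering discussion in the preceding paragraph, which you have simply spelled out in more detail (including the support-disjointness bookkeeping the paper leaves implicit). The only small caution is your parenthetical equating parameter independence with $\lambda$-independence --- in the paper these are distinct conditions, and what the steering step actually needs is that Bob's ontic-state distribution is insensitive to Alice's remote measurement choice, a point you rightly flag as the place where care is required.
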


\begin{proof}
This follows from the PBR theorem and the occurrence of steering in quantum mechanics.
\end{proof}

The ease at which this result falls out may lead us to be cautious of the strength of the preparation independence assumption.

\section{Discussion}

We have presented a generalisation and reformulation of the Harrigan-Spekkens criterion for the reality or onticity of the wavefunction. The reformulation can be thought of as a special case of the dual equivalence between the category of {von Neumann} algebras and $*$-homomorphisms and the category of measure spaces and measurable functions. It has been seen to have several advantages: it avoids measure theoretic technicalities relating to sets of measure zero and is mathematically and conceptually straightforward. Of course, it is also general enough to apply to any object or property in any ontological theory.

The first obvious application of the criterion to an object or property other than the wavefunction is to the observable properties of a system. This led to a new characterisation of locality and non-contextuality in terms of the nature of the observed properties. This can provide a useful tool for looking at foundational results: we have used it to obtain a short proof that local ontological models have a canonical form and to gain another perspective on the EPR argument. The characterisation is similar to the Kochen-Specker \cite{kochen:75} or topos approach \cite{isham:98}\index{topos approach} treatment of non-contextuality.

It is notable that the characterisation draws a connection between locality and onticity: these are the properties that are dealt with by the Bell and PBR theorems, respectively. A further connection was found in theorem \ref{thm:wkbell}, which showed that a weakened version of Bell's result can be obtained by an argument that combines the PBR result with the incompatibility of steering and the onticity of the wavefunction.

In relation to the PBR result itself, we have attempted to give a more careful treatment of the assumption of preparation independence, and made a concrete analogy between this property and locality/non-contextuality. It is possible to relax the assumption to something analogous to no-signalling, in which case we have pointed out that the PBR argument no longer holds. This amounts to introducing global or non-local correlations in the joint ontic state, which at least is consistent with the Bell and Kochen-Specker theorems. An open question is whether by another argument the result can be shown to hold with the relaxed assumption of `no-preparation-signalling'.

\end{multicols}

\bibliographystyle{abbrv}
\bibliography{refs}

\begin{thebibliography}{10}

\bibitem{abramsky:10}
S.~Abramsky.
\newblock Relational hidden variables and non-locality.
\newblock {\em Studia Logica}, 101(2):411--452, 2013.

\bibitem{abramsky:11}
S.~Abramsky and A.~Brandenburger.
\newblock The sheaf-theoretic structure of non-locality and contextuality.
\newblock {\em New Journal of Physics}, 13(11):113036, 2011.

\bibitem{abramsky:14}
S.~Abramsky and A.~Brandenburger.
\newblock An operational interpretation of negative probabilities and
  no-signalling models.
\newblock {\em arXiv:1401.2561}, 2014.
\newblock ArXiv preprint.

\bibitem{abramsky:11a}
S.~Abramsky, S.~Mansfield, and R.~{Soares Barbosa}.
\newblock The cohomology of non-locality and contextuality.
\newblock In {\em Electronic Proceedings in Theoretical Computer Science},
  volume~95, pages 1--14, 2012.

\bibitem{bacciagaluppi:10}
G.~Bacciagaluppi, A.~Valentini, and M.~J{\"a}hnert.
\newblock Quantum theory at the crossroads: Reconsidering the 1927 {S}olvay
  {C}onference.
\newblock {\em Physics Today}, 63(10):53, 2010.

\bibitem{barrett:07}
J.~Barrett.
\newblock Information processing in generalized probabilistic theories.
\newblock {\em Physical Review A}, 75(3):032304, 2007.

\bibitem{bell:64}
J.~S. Bell.
\newblock {On the {E}instein-{P}odolsky-{R}osen paradox}.
\newblock {\em Physics}, 1(3):195--200, 1964.

\bibitem{bell:87}
J.~S. Bell.
\newblock {\em Speakable and unspeakable in quantum mechanics: Collected papers
  on quantum philosophy}.
\newblock Cambridge University Press, 1987.

\bibitem{bohm:52}
D.~Bohm.
\newblock A suggested interpretation of the quantum theory in terms of
  ``hidden'' variables. {I}.
\newblock {\em Physical Review}, 85(2):166, 1952.

\bibitem{bohm:52a}
D.~Bohm.
\newblock A suggested interpretation of the quantum theory in terms of
  ``hidden'' variables. {II}.
\newblock {\em Physical Review}, 85(2):180, 1952.

\bibitem{brandenburger:11}
A.~Brandenburger and H.~J. Keisler.
\newblock What does a hidden variable look like, 2011.

\bibitem{brandenburger:13}
A.~Brandenburger and H.~J. Keisler.
\newblock Use of a canonical hidden-variable space in quantum mechanics.
\newblock In {\em Computation, Logic, Games, and Quantum Foundations. The Many
  Facets of Samson Abramsky}, pages 1--6. Springer, 2013.

\bibitem{brandenburger:08}
A.~Brandenburger and N.~Yanofsky.
\newblock A classification of hidden-variable properties.
\newblock {\em Journal of Physics A: Mathematical and Theoretical}, 41:425302,
  2008.

\bibitem{clauser:69}
J.~F. Clauser, M.~A. Horne, A.~Shimony, and R.~A. Holt.
\newblock Proposed experiment to test local hidden-variable theories.
\newblock {\em Physical Review Letters}, 23(15):880--884, 1969.

\bibitem{colbeck:12}
R.~Colbeck and R.~Renner.
\newblock Is a system's wave function in one-to-one correspondence with its
  elements of reality?
\newblock {\em Physical Review Letters}, 108(15):150402, 2012.

\bibitem{einstein:35}
A.~Einstein, B.~Podolsky, and N.~Rosen.
\newblock Can quantum-mechanical description of physical reality be considered
  complete?
\newblock {\em Physical Review}, 47(10):777, 1935.

\bibitem{fine:82}
A.~Fine.
\newblock Hidden variables, joint probability, and the {B}ell inequalities.
\newblock {\em Physical Review Letters}, 48(5):291--295, 1982.

\bibitem{ghirardi:80}
G.-C. Ghirardi, A.~Rimini, and T.~Weber.
\newblock A general argument against superluminal transmission through the
  quantum mechanical measurement process.
\newblock {\em Lettere Al Nuovo Cimento (1971--1985)}, 27(10):293--298, 1980.

\bibitem{hardy:92}
L.~Hardy.
\newblock Quantum mechanics, local realistic theories, and {L}orentz-invariant
  realistic theories.
\newblock {\em Physical Review Letters}, 68(20):2981--2984, 1992.

\bibitem{hardy:93}
L.~Hardy.
\newblock Nonlocality for two particles without inequalities for almost all
  entangled states.
\newblock {\em Physical Review Letters}, 71(11):1665--1668, 1993.

\bibitem{harrigan:10}
N.~Harrigan and R.~W. Spekkens.
\newblock Einstein, incompleteness, and the epistemic view of quantum states.
\newblock {\em Foundations of Physics}, 40(2):125--157, 2010.

\bibitem{heunen:13}
C.~Heunen.
\newblock Private communication, 2013.

\bibitem{isham:98}
C.~Isham and J.~Butterfield.
\newblock Topos perspective on the {K}ochen-{S}pecker theorem: I. quantum
  states as generalized valuations.
\newblock {\em International Journal of Theoretical Physics},
  37(11):2669--2733, 1998.

\bibitem{kochen:75}
S.~Kochen and E.~P. Specker.
\newblock Logical structures arising in quantum theory.
\newblock In {\em The Logico-Algebraic Approach to Quantum Mechanics}, pages
  263--276. Springer, 1975.

\bibitem{mansfield:13t}
S.~Mansfield.
\newblock The mathematical structure of non-locality \& contextuality, 2013.
\newblock {D.Phil. thesis, Oxford University}.

\bibitem{mansfield:13b}
S.~Mansfield and R.~{Soares Barbosa}.
\newblock Construction of {Bell} models from {K}ochen-{S}pecker models.
\newblock In {\em Proceedings of Quantum Physics \& Logic}, 2013.

\bibitem{popescu:94}
S.~Popescu and D.~Rohrlich.
\newblock Quantum nonlocality as an axiom.
\newblock {\em Foundations of Physics}, 24(3):379--385, 1994.

\bibitem{pusey:12}
M.~F. Pusey, J.~Barrett, and T.~Rudolph.
\newblock On the reality of the quantum state.
\newblock {\em Nature Physics}, 8(6):476--479, 2012.

\bibitem{schrodinger:36}
E.~Schr{\"o}dinger.
\newblock Probability relations between separated systems.
\newblock In {\em Mathematical Proceedings of the Cambridge Philosophical
  Society}, volume~32, pages 446--452. Cambridge University Press, 1936.

\end{thebibliography}

\end{document}